\title{Does Your Blockchain Need Multidimensional Transaction Fees?}
\author{Nir Lavee \footnote{Tel Aviv University and Starkware.} \and Noam Nisan\footnote{Hebrew University and Starkware. This work was supported by the Israeli Science Foundation (ISF number 505/23).} \and Mallesh Pai\footnote{Special Mechanisms Group, Consensys and Paradigm.} \and Max Resnick \footnote{Anza Labs and Special Mechanisms Group, Consensys.}}
\newtheorem{theorem}{Theorem}
\newtheorem{observation}{Observation}
\newtheorem{claim}[theorem]{Claim}
\newtheorem{corollary}[theorem]{Corollary}
\newtheorem{definition}{Definition}
\newenvironment{proof}[1][Proof]{\noindent\textbf{#1.} }{\ \rule{0.5em}{0.5em}}
\let\oldexample\example
\renewcommand{\example}{\oldexample\normalfont}
\let\oldremark\remark
\renewcommand{\remark}{\oldremark\normalfont}
\newcommand{\norm}[1]{\left\lVert#1\right\rVert}
\newcommand{\Rnonneg}{\mathbb{R}_{\ge 0}} 
\begin{document}

\maketitle

\begin{abstract}
Blockchains have block‑size limits to ensure the entire cluster can keep up with the tip of the chain. These block-size limits are usually single-dimensional, but richer multi‑dimensional constraints allow for greater throughput. The potential for performance improvements from multi-dimensional resource pricing has been discussed in the literature, but exactly how big those performance improvements are remains unclear. In order to identify the magnitude of additional throughput that can be unlocked by multi-dimensional transaction fees, we introduce the concept of an $\alpha$‑approximation. A constraint set $C_1$ is $\alpha$‑approximated by $C_2$ if every block feasible under $C_1$ is also feasible under $C_2$ once all resource capacities are scaled by a factor of $\alpha$ (e.g., $\alpha =2$ corresponds to doubling all available resources). We show that the $\alpha$-approximation of the  optimal single-dimensional gas measure corresponds to the value of a specific zero sum game.  However, the more general problem of finding the optimal $k$-dimensional approximation is NP-complete. Quantifying the additional throughput that multi-dimensional fees can provide allows blockchain designers to make informed decisions about whether the additional capacity unlocked by multi-dimensional constraints is worth the additional complexity they add to the protocol. 

\end{abstract}

\section{Introduction and Motivation}

Any new block added to the blockchain has to be replayed by the entire network. If the block is too large and takes too long to replay, then slower replicas may fall behind the tip of the chain, leading to degraded performance or even a pause in block production until the slow replicas can catch up. To ensure that all replicas can keep up, blockchain designers place constraints on which transactions can be packed into a valid block. If the leader proposes a block that violates these constraints, the rest of the cluster realizes that the block is invalid and rejects it. 

The original Bitcoin implementation \cite{bitcoin} defined size as the number of bytes and limited a block to 1 Megabyte.\footnote{The later ``SegWit'' upgrade of Bitcoin took into account that not all bytes impose the same burden on the implementation, defined different ``weights'' for different components of a transaction, giving a measure of ``virtual byte'', and limited blocks to 4 Mega {\em Virtual} Byte. This typically allowed 60\%-70\% more transactions in a block relative to raw bytesize measurements.} Most blockchains in production today are only slightly more sophisticated than this. Nearly all use a single-dimensional measure of block size: for example, until recently, Ethereum sized all transactions according to a measure it calls gas, and Solana measures transactions according to Computational Units (CUs). But blockchains have more than a single constraint on new block production. Not only do blockchains inherit all of the intricacies of the computers that run their client software (e.g. storage, computation, memory bandwidth) but they also have to grapple with communication between the replicas (networking). Therefore, a blockchain protocol designer may wish to meter and price each of these resources separately---for example, certain transactions may be computationally trivial but require large amounts of networking or storage. This is not merely a theoretical concern---as of the Dencun upgrade in Spring 2024, Ethereum now defines two types of resources ``gas'' and ``blobs''.  At the time of writing of this paper, Ethereum allows blocks to use, at a maximum, 36M gas and 6 blobs.\footnote{Ethereum also has a targeted size of 18M gas and 3 blobs, and the size of the block relative to the target affects the prices charged in the next block. We will ignore the target and focus solely on the maximum as the target is orthogonal to this paper.} 

While there has been some recent work dealing with multi-dimensional constraints in blockchain systems \cite{MDBFM23, CMW23, ADM24}, handling multi-dimensional constraints remains difficult.  Not only does the optimal assembly of blocks become complex algorithmically, but the pricing of transactions and the bidding process --- the transaction fee mechanism --- must also take into account multiple resources, which further complicates the strategic analysis. Avoiding this complexity is very desirable. This paper asks the question of whether and to what extent the complexity of multiple dimensions is worth the cost: how much do we lose by sticking to a single resource dimension?

Take, for example, a blockchain that has two resources, gas and blobs, similar to Ethereum. Any given transaction can use some amount of gas and blobs. Can we replace the two restrictions of $\texttt{gas} \le 36\text{M}$
and $\texttt{\texttt{blobs}}\le 6$ by some single restriction $(\alpha \cdot \texttt{gas} + \beta \cdot \texttt{blobs} \le B)$?
It is not difficult to see that the two constraints cannot be replaced by a single linear one  without incurring a loss.  Figure \ref{f1} depicts the two ``true'' constraints as well as a single linear inequality that implies both of them ($\texttt{gas} + 6 \cdot \texttt{blobs} \leq 36$. However, note that this  unavoidably rules out much of the space of actual block capacities in the worst case: a block which uses $18M$ \texttt{gas} and $3$ blobs is considered as full by the linear measure, whereas it is clearly half-full on each dimension. 

\begin{figure}\label{f1}
    \centering
    \begin{tikzpicture}[scale=0.7]
    \draw[->, thick] (0,0) -- (9,0) node[right] {Gas used};
    \draw[->, thick] (0,0) -- (0,4) node[above] {Blobs used};
    
    \draw[thick] (0,3) -- (8,3) node[midway, above] {$y \leq 6$};
    \draw[thick] (8,0) -- (8,3) node[midway, right] {$x \leq 36$};
    \draw[thick] (0,3) -- (8,0) node[midway, above right] {$x + 6y \leq 30$};
    
    \node[left] at (0,3) {6};
    \node[below] at (8,0) {36};
    
\end{tikzpicture}
    \caption{Two Ethereum-like 
constraints: the quantity $x$ of one resource (gas) limited by $x \le 30$
and the quantity $y$ of another resource (fractional blobs) 
limited by $y \le 6$.  The 
best single-dimensional gas measure that captures both constraints is $x+5y \le 30$, losing a factor of 2 in capacity in the worst case.}
    \label{fig:1}
\end{figure}
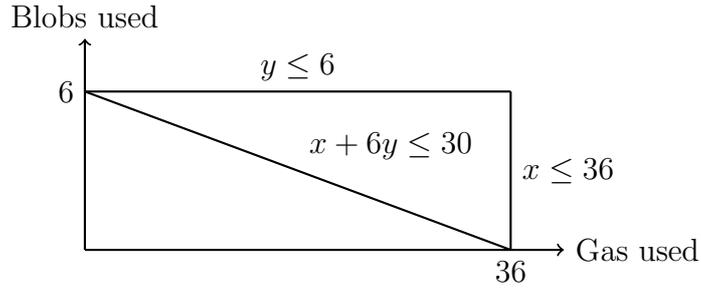

It is easy to verify that, in the worst case, a single linear constraint may lose a factor of $k$ in a setting with $k$ distinct resources. But this is without any further modeling of how various resources are used by basic operations available in the protocol. How much would actually lose given a specific instruction set?  

We suggest a way to quantify the loss incurred based on the specific instruction set and its associated costs, allowing a more nuanced decision of whether or not the complexity of multi-dimensional fees is required. The hope is that in cases where a single gas measure will incur only a small loss of capacity, blockchain designers will choose simplicity. Conversely, in other cases, the loss may be too large, and the system would be better off with a multi-dimensional fee market.

\section{Model}

\begin{figure}[h]
    \centering
    \begin{tikzpicture}
        \fill[blue!10] (0,-1) rectangle (4,3);
        
        \fill[red!10] (-1.2,-1) rectangle (-0.2,3);
        
        \fill[green!10] (0,3.2) rectangle (4,4.2);
        
        \fill[purple!20] (-5.5,0.5) rectangle (-1.5,1.5);
        \fill[teal!20] (5,0.5) rectangle (9,1.5);
        
        \node at (-0.7,1) [rotate=90] {\textbf{Operations}};
        \node at (2,3.7) {\textbf{Resources}};
        \node at (-3.5,1)  {\textbf{Block}};
        \node at (7,1)  {\textbf{$B = (B_j)$}};
        \node at (2,1) {$W = (w_{ij})$};
        \node at (-1.4,1) {$\cdot$};
        \node at (4.5,1) {$\leq$};

    \end{tikzpicture}
    \caption{Each row of the Operations-Resource Matrix represents an operation, each column represents a resource, and each entry $w_{ij}$ represents the amount of resource $j$ consumed by operation $i$.}
    \label{fig:resource-matrix}
\end{figure}
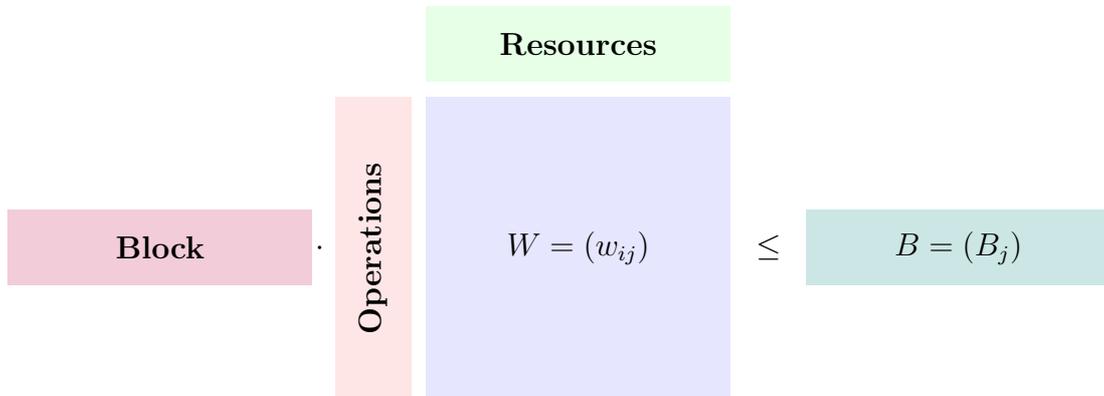

The basic property of the system that we will consider is the set of {\em primitive operations} that the system provides (e.g. Ethereum opcodes or Solana Instructions), where each transaction is composed of a sequence of primitive operations. Each of these primitive operations $i \in I$ is associated with the amount $w_{ij}$ that it uses of each resource $j \in J$. The actual constraints of the system allow a block to contain at most $B_j$ quantity of each resource $j$.  The parameters of the system are thus the operation-resource matrix $W=(w_{ij})$ and the resource capacities vector $B=(B_j)$. We model the total use of resources by a transaction as the sum of the resources used by all of the operations in it, so a set of transactions $T$ fits in a single block if and only if for every resource $j$ we have that $\sum_{i} x_i w_{ij} \le B_j$, where $x_i$ is the number of uses of primitive operation $i$ in $T$. We will now define the sense in which  a {\em single synthetic constraint} can replace the {\em set of actual constraints}. 

\begin{definition}
    A gas measure for a given set of primitive operations 
    is a non-negative vector $g=(g_i)$.
    The gas measure is said to represent a given operation-resource matrix 
    $W=(w_{ij})$ with resource capacities 
    $B=(B_j)$ if for any non-negative vector $x=(x_i)$ we have that
    $\sum_i x_i g_i \le 1$ implies that  
    $\sum_i x_i w_{ij} \le B_j$ for all $j$.
\end{definition}

This definition requires that a single  constraint on a set of transactions that requires the total gas measure to be at most $1$ is a {\em sufficient} condition for all resource constraints 
to be met.  It is easy to observe that
for every operation-resource matrix $W$ and resource capacities vector $B$ there is a
unique optimal representation as a single gas measure.  The simple proof appears in 
Appendix \ref{app-proof}.

\begin{observation}\label{obs:measure}
    For every operation-resource matrix 
    $W=(w_{ij})$ with resource capacities 
    $B=(B_j)$, the gas measure  given by $g_i = max_j w_{ij}/B_j$ represents
    $W$ with $B$.  Furthermore, $g$ is minimal in the sense that for
    any gas measure $g'$ that represents $W$ with $B$ we have that for
    all $i$, $g'_i \ge g_i$.
\end{observation}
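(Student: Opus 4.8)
The plan is to prove the two assertions separately, since each reduces to a short direct argument. For the representation claim, I would fix an arbitrary resource $j$ and an arbitrary non-negative vector $x=(x_i)$ with $\sum_i x_i g_i \le 1$. By definition $g_i = \max_{j'} w_{ij'}/B_{j'} \ge w_{ij}/B_j$ for every $i$, and every $x_i \ge 0$, so I can bound term by term: $\sum_i x_i w_{ij} = B_j \sum_i x_i (w_{ij}/B_j) \le B_j \sum_i x_i g_i \le B_j$. Since $j$ was arbitrary, all resource constraints hold, i.e.\ $g$ represents $W$ with $B$. (This tacitly uses that every capacity $B_j$ is positive, which I take to be part of the model; otherwise $g_i = \max_j w_{ij}/B_j$ is not even well-defined.)

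For minimality, I would fix an operation $i$ and an arbitrary gas measure $g'$ representing $W$ with $B$, and show $g'_i \ge g_i$. If $g_i = 0$ this is immediate from non-negativity of $g'$, so assume $g_i > 0$ and let $j^\ast$ attain the maximum, so that $w_{ij^\ast} = g_i B_{j^\ast} > 0$. The key idea is to probe $g'$ with the ``single-operation'' vectors $x^{(t)}$ given by $x^{(t)}_i = t$ and $x^{(t)}_{i'} = 0$ for $i' \ne i$, choosing the scalar $t>0$ just large enough to overload resource $j^\ast$ while keeping the $g'$-cost at most $1$. Concretely, suppose toward a contradiction that $g'_i < g_i$; then pick any $t$ with $1/g_i < t < 1/g'_i$, using the convention $1/0 = +\infty$ so that such a $t$ exists even when $g'_i = 0$. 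Then $\sum_{i'} x^{(t)}_{i'} g'_{i'} = t g'_i < 1$, so since $g'$ represents $W$ with $B$ we must have $\sum_{i'} x^{(t)}_{i'} w_{i'j^\ast} = t\, w_{ij^\ast} \le B_{j^\ast}$; but $t\, w_{ij^\ast} = t\, g_i B_{j^\ast} > B_{j^\ast}$, a contradiction. Hence $g'_i \ge g_i$ for every $i$.

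I do not expect a genuine obstacle: the statement is elementary, and the only points needing any care are (i) ensuring the probing argument still works when $g'_i = 0$, which the $1/0 = +\infty$ convention handles cleanly, and (ii) the standing assumption that each $B_j > 0$. If one prefers to avoid the strict-inequality choice of $t$, an equivalent route is to take $t = 1/g'_i$ when $g'_i > 0$ (yielding $w_{ij^\ast}/g'_i \le B_{j^\ast}$, hence $g'_i \ge w_{ij^\ast}/B_{j^\ast} = g_i$) and handle the case $g'_i = 0$, $g_i > 0$ separately by letting $t \to \infty$.
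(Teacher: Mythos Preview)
Your proof is correct and follows essentially the same approach as the paper: term-by-term bounding via $g_i \ge w_{ij}/B_j$ for the representation claim, and probing with single-operation vectors for minimality. You are in fact more careful than the paper about the scaling of the probe vector and the degenerate cases $g_i = 0$ and $g'_i = 0$, but the underlying idea is identical.
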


We thus see that if we want a single synthetic gas measure
to imply all constraints in the worst case then 
the cost $g_i$
of each operation should be the maximum over all resources of the fraction of this resource's capacity that is taken by the operation.  

\section{Results}

The gas constraint being satisfied is, by definition, a sufficient condition for all resource constraints being satisfied. It is certainly not a necessary condition as Figure \ref{f1} illustrates.  But perhaps it is ``close'' to a necessary condition in the sense that some $\alpha$-factor relaxation of the 
gas constraint is a necessary condition?  Quantifying this closeness is what we are aiming for.
   
\begin{definition}
    A gas measure $g$ that represents an operation-resource matrix 
    $W=(w_{ij})$ with resource capacities $B=(B_j)$ is called an 
    $\alpha$-approximate representation ($\alpha \ge 1$), if
    for any non-negative vector $x=(x_i)$ we have that
    $\sum_j x_i w_{ij} \le B_j$ for all $j$ implies that 
    $\sum_i x_i g_i \le \alpha$.  
    
    The {\em single-dimensional approximability}
    of $W$ with $B$ is the smallest value $\alpha$ for
    which there exists an $\alpha$-approximate gas measure representation
    of $W$ with $B$.
\end{definition}

The level of approximability quantifies how much is lost when replacing the 
multidimensional constraints with the best single constraint that represents it.  We completely 
characterize this level as a value of the (zero-sum) game that is associated with $W$ and $B$.
Perhaps surprising at first sight, the proof is straightforward and appears in Appendix \ref{app-proof}.

\begin{theorem} \label{th:main}
    The single-dimensional
    approximability of an operation-resource matrix 
    $W=(w_{ij})$ with resource capacities $B=(B_j)$ is the exactly the reciprocal of
    the value
    of the zero-sum game with utilities $u_{ij} = w_{ij}/(B_j \cdot g_i)$
    and where $g_i = max_j w_{ij}/B_j$ is the gas measure achieving this approximation 
    (where the row player, who chooses $i$, is the minimizer.)
\end{theorem}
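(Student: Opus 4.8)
The plan is to unwind the definition of single-dimensional approximability into a linear program and then recognize the optimum of that program as the reciprocal of the value of the stated zero-sum game.

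First I would reduce to the particular gas measure $g_i=\max_j w_{ij}/B_j$. By Observation~\ref{obs:measure}, every gas measure $g'$ that represents $W$ with $B$ satisfies $g'_i\ge g_i$ for all $i$; hence for every $x\ge 0$ with $\sum_i x_i w_{ij}\le B_j$ for all $j$ we have $\sum_i x_i g'_i\ge\sum_i x_i g_i$, so the least admissible $\alpha$ for $g'$ is at least the least admissible $\alpha$ for $g$. Therefore the approximability equals
\[
\alpha^\star=\sup\Bigl\{\textstyle\sum_i x_i g_i \ :\ x\ge 0,\ \sum_i x_i w_{ij}\le B_j\ \text{for all } j\Bigr\}.
\]
To see this supremum is finite and attained, discard any operation with $g_i=0$ (it uses no resource and contributes nothing to the objective); for each remaining $i$ choose a $j$ attaining the maximum defining $g_i$, so that $x_i w_{ij}\le\sum_{i'}x_{i'}w_{i'j}\le B_j$ forces $x_i\le 1/g_i$, making the feasible set compact.

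Next I would normalize the program. With $g_i>0$, the substitution $y_i=x_i g_i$ turns the constraint $\sum_i x_i w_{ij}\le B_j$ into $\sum_i y_i\,\tfrac{w_{ij}}{g_i B_j}\le 1$, i.e. $\sum_i y_i u_{ij}\le 1$ for the matrix $u_{ij}=w_{ij}/(B_j g_i)$ of the theorem, while the objective becomes $\sum_i y_i$. Every row of $U=(u_{ij})$ has maximum entry $g_i^{-1}\max_j w_{ij}/B_j=1$, so $U\in[0,1]^{I\times J}$ is a bona fide payoff matrix. Writing a nonzero feasible $y$ as $y_i=Sp_i$ with $S=\sum_i y_i>0$ and $p$ a distribution over rows, the constraints say precisely $S\le\bigl(\max_j\sum_i p_i u_{ij}\bigr)^{-1}$, so $\alpha^\star=\max_y\sum_i y_i=\bigl(\min_p\max_j\sum_i p_i u_{ij}\bigr)^{-1}$. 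The inner quantity is exactly the value $v$ of the zero-sum game on $U$ with the row player as minimizer, and the minimax theorem rewrites it as $\max_q\min_i\sum_j q_j u_{ij}$; since each row of $U$ contains a $1$, taking point masses gives $0<v\le 1$, so $\alpha^\star=1/v$ is finite and at least $1$, as required.

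I expect the main obstacle to be conceptual rather than technical: seeing that the ``worst block'' for a given gas measure is governed by the optimal row-player strategy in this game, which is what motivates the change of variables $y_i=x_i g_i$ in the first place. Once that normalization is in place the rest is a routine linear-programming manipulation; the only points needing genuine care are the reduction to the minimal gas measure and the degenerate operations with $g_i=0$, handled by the compactness argument above.
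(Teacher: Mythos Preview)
Your proposal is correct and takes essentially the same approach as the paper: both arguments hinge on the change of variables $y_i=x_ig_i$ (written in the paper as $x'_i=x_ig_i/\alpha'$) to convert feasible blocks into row-player mixed strategies for the matrix $U$, after which the approximation factor is read off as the reciprocal of the game value. You package it as a linear program and are a bit more explicit about the reduction to the minimal measure via Observation~\ref{obs:measure} and the degenerate $g_i=0$ case, while the paper proves the two directions separately (upper bound by contradiction, tightness by plugging in the row player's equilibrium strategy $x^*$), but these are presentational rather than substantive differences.
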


Table 1 presents an example that has two resources 
with maximum sizes 
$B_1=15$ and $B_2=3$ and four possible operations and 
calculates
the minimal
single gas measure $g$ that represents it.  
Table 2 presents the associated game with the player's 
min-max strategies and the associated value of the game which is the reciprocal of the approximation ratio, $\alpha=11/8$, that is achieved by $g$.

\begin{table}
\centering
\begin{minipage}{0.45\textwidth}
\centering
\begin{tabular}{|c||c|c||c|} 
\hline 
{\bf Operation} & {\bf $w_{i1}$} & {\bf $w_{i2}$} & $g_i$\\ 
\hline \hline
{\bf Op1}  & 2 & 1 & 1/3\\ 
\hline 
{\bf Op2}  & 6 & 2 & 2/3\\ 
\hline 
{\bf Op3}  & 9 & 1 & 3/5\\ 
\hline 
{\bf Op4}  & 10 & 1 & 2/3\\ 
\hline \hline
{\bf $B_j$}    & 15 & 3 &\\ 
\hline
\end{tabular}
\caption{\small An example with four operations and two resources: the operation matrix $W=(w_{ij})$, resource bounds $\vec{B}$, and the implied single gas measure $\vec{g}$.}
\label{tab:my_label1}
\end{minipage}\hfill
\begin{minipage}{0.45\textwidth}
\centering
\begin{tabular}{|c||c||c||c|} 
\hline 
{\bf Operation} & $u_{i1}$ & $u_{i2}$ & {\bf $x_i$}\\ 
\hline \hline
{\bf Op1}  & 2/5 & 1 & 5/11 \\ 
\hline 
{\bf Op2}  & 3/5 & 1 & 0 \\ 
\hline 
{\bf Op3}  & 1 & 5/9 & 0 \\ 
\hline 
{\bf Op4}  & 1 & 1/2 & 6/11 \\ 
\hline \hline
{\bf $y_j$}   & 5/11 & 6/11  & {\bf $1/\alpha = 8/11$} \\ 
\hline
\end{tabular}
\caption{\small The associated game $U=(u_{ij})$, column player (maximizer) strategy $\vec{y}$, 
and row player (minimizer) strategy $\vec{x}$ at equilibrium, and game value $1/\alpha$.}
\label{tab:my_label2}
\end{minipage}
\end{table}

The formal way to interpret an approximability ratio of $\alpha$ is as
a resource augmentation bound: suppose that we use the single gas measure $g$ instead of
the true multidimensional constraints but compensate by augmenting all resource capacities by a factor of $\alpha$. Then, for any set of transactions that can fit into an original block
according to the real multidimensional constraints, the single gas measure will allow 
them to be packed into a single $\alpha$-augmented block.  A more natural interpretation is that, if one assumes that  transaction sizes are small relative to block sizes (and so integrality constraints are insignificant),\footnote{Alternatively, as in \cite{BN25}, that the variable block size allowed by EIP-1559 can compensate for integrality constraints.} then the total capacity obtained by the single-dimensional measure is
at least $1/\alpha$ fraction of the maximum possible capacity allowed by the true
multidimensional constraints. 

This result compares the theoretical capacity of a blockchain under its true, multi‑dimensional constraints with that of a simplified model relying on a single gas‑measure constraint. Once the throughput gains provided by multi‑dimensional fees are clearly quantified, blockchain designers can more easily balance the trade‑off between the high throughput (but additional complexity) of multi‑dimensional fee markets and the simplicity (but lower capacity) of single‑dimensional fee markets.

\section{Optimal \(k\)-Dimensional Measures}
Our basic results correspond to the loss from a single-dimensional measure. What should a designer do if the computed worst-case loss from a single-dimensional measure is large? 

A reasonable next step might be to accept a $k$-dimensional measure where $k$ is much smaller than the true number of resources $|J|$ which we denote for brevity $n$.\footnote{Note that in reality, a system has many resources that can be used in parallel---network, cache storage, cold storage, multiple cores, GPUs, etc. Nevertheless, a planner may wish to only meter a few ``amalgamated'' resources such as computation and storage in order to mitigate the UX and design issues previewed above and discussed in the Appendix \ref{app-comp}.} In this section, we provide a preliminary investigation of selecting $k$ and the associated loss.

To that end, as before, let $
W = (w_{ij}) \in \mathbb{R}^{|I|\times n},$
where \(I\) indexes the set of primitive operations and \(j=1,\dots,n\) indexes the \(n\) resources. Let $B = (B_1, B_2, \ldots, B_n) \in \mathbb{R}_{>0}^n$ denote the resource capacities and define the normalized matrix
\[
w'_{ij} = \frac{w_{ij}}{B_j}.
\]
A block, represented by a nonnegative vector \(x = (x_i)\), of counts of primitive operations is \emph{feasible} if
\[
\sum_{i\in I} x_i\, w'_{ij} \le 1 \quad \text{for all } j = 1,\dots,n.
\]
Let $K$ be the largest possible 1-norm for any feasible block. 

When \(n\gg k\), we may wish to compress the \(n\) resources into a \(k\)-dimensional measure. 

\begin{definition}[$k$-Dimensional Gas Measure ]
A $k$-dimensional gas measure $A \in \Rnonneg^{|I|\times k}$  \textbf{represents} $W'$ if for any $x \ge 0$, we have:
\[ \forall r \leq q : \sum_i x_i A_{ir} \le 1 \;\implies\; \sum_i x_i w'_{ij} \le 1 \forall j.\]
\end{definition}

There are two main approaches one could envision to finding a $k$-dimensional gas measure:
\begin{enumerate}
    \item \textbf{Partitioning}: Partition the resources into $k$ groups and apply the single-dimensional measure to each group.
    \item \textbf{Factorization}: Find a low-rank factorization of $W'$ as $AB + E$ where $A$ and $B$ have small dimensions and $E$ is a small error term.
\end{enumerate}
We consider each in turn.

\subsection{A Partitional Approach}
Recall that in our single-dimensional model for \(n\) resources, for each primitive operation \(i\) we define the optimal gas cost as
\[
g_i = \max_{j\in [n]} \frac{w_{ij}}{B_j},
\]
where \(w_{ij}\) is the usage of operation \(i\) on resource \(j\) and \(B_j\) is the capacity of resource \(j\). Suppose we wish to reduce the \(n\) resources into \(k\) groups. Let 
$\mathcal{P} = \{P_1, P_2, \ldots, P_k\},$
be a partition of the resource set \(\{1,2,\ldots,n\}\) into \(k\) disjoint subsets. For each subset \(P_r\) we apply the single-dimensional measure only to the resources in \(P_r\). That is, for each operation \(i\) we define
\[
g_i^{(r)} = \max_{j\in P_r} \frac{w_{ij}}{B_j}.
\]
If a transaction \(x \ge 0\) uses the operations, then its aggregated cost with respect to subset \(P_r\) is 
\[
C_r(x) = \sum_{i} x_i \, g_i^{(r)}.
\]
A natural way to define the loss (or approximation factor) of a partition \(\mathcal{P}\) is to take the worst-case over the \(k\) subsets:
\[
L(\mathcal{P}) = \max_{r=1,\ldots,k} L(P_r),
\]
where \(L(P_r)\) is the loss incurred by using the single-dimensional measure on resources in \(P_r\). Our goal is to choose the partition \(\mathcal{P}\) that minimizes \(L(\mathcal{P})\).
For our purposes, we define the following decision version, i.e., does there exist a partition \(\mathcal{P}\) of the resources into \(k\) subsets such that
\[
L(\mathcal{P}) \le \alpha_0?
\]

We now show that even in the case \(k=2\) the decision problem is NP-complete by reducing from the well-known Equal Cardinality Partition \cite{GJ79}.

\begin{theorem}\label{thm:optimal-partition}
The decision problem for optimal \(k\)-partitioning for gas measures (with \(k=2\)) is NP-complete.
\end{theorem}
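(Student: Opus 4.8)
The plan is to establish both directions of NP-completeness separately, with the hardness direction being essentially all of the work.

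\textbf{Membership in NP.} For a \emph{fixed} partition $\mathcal{P}=\{P_1,P_2\}$, the quantity $L(P_r)$ is exactly the single-dimensional approximability of the sub-instance obtained by keeping only the columns in $P_r$ (with capacities $(B_j)_{j\in P_r}$): after normalizing, $L(P_r)=\max\{\sum_i x_i\max_{j\in P_r}w'_{ij} \;:\; x\ge 0,\ \sum_i x_i w'_{ij}\le 1\ \forall j\in P_r\}$. This is a feasible, bounded linear program (bounded because any operation with positive $P_r$-gas cost uses some resource of $P_r$ with a positive coefficient), and by Theorem~\ref{th:main} it is the reciprocal of a polynomially sized rational zero-sum game value. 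Hence the certificate is the partition itself, and verifying $L(P_1)\le\alpha_0$ and $L(P_2)\le\alpha_0$ is a pair of polynomial-time LP evaluations; so the problem is in NP.

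\textbf{Reduction from Equal Cardinality Partition.} Given positive integers $a_1,\dots,a_{2t}$ with sum $S$ (WLOG $S$ even and $\max_j a_j\le S/2$), we want a 2-partition instance $(W,B,\alpha_0)$ that is a ``yes'' instance iff some $Q\subseteq[2t]$ has $|Q|=t$ and $\sum_{j\in Q}a_j=S/2$. I would first collapse the two side conditions into one: replacing $a_j$ by $c_j:=M+a_j$ with $M:=S+1$, the condition ``$|Q|=t$ and $\sum_{j\in Q}a_j=S/2$'' becomes simply ``$\sum_{j\in Q}c_j=\tfrac12\sum_j c_j$'', since any deviation $|Q|\ne t$ forces a discrepancy that is a nonzero multiple of $M>S$, which the $a$-part can never cancel. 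So it suffices to build a gadget in which a group's loss is controlled by the total $c$-weight of the resources it holds, tuned so that loss $\le\alpha_0$ on \emph{both} sides forces an even split. Concretely I would use one resource $r_j$ per index, a ``full operation'' $o_j$ that consumes resource $r_j$ at exactly its capacity and nothing else (so that the block using each $o_j$ once is feasible under the \emph{true} multidimensional constraints and has single-group gas cost equal to the number of $r_j$'s in the group — giving the lower bound $L(P_r)\ge |P_r\cap\{r_j\}|$ for free), together with a constant number of auxiliary ``budget'' resources and weight-loading operations that make that lower bound slack unless each group's total $c$-weight is exactly $\tfrac12\sum_j c_j$. Choosing capacities and $\alpha_0:=t$ (after an appropriate global normalization), $L(\mathcal{P})\le\alpha_0$ becomes equivalent to the even split, i.e.\ to a valid equal-cardinality partition; all numbers are polynomially bounded so the reduction is polynomial. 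Completeness is then direct: a valid $Q$ yields a resource partition with $L(P_1)=L(P_2)=t=\alpha_0$, the full operations supplying the matching lower bound and the budget constraints (tight exactly at the even split) preventing any excess.

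\textbf{Soundness and the main obstacle.} The hard part is soundness: showing that \emph{every} 2-partition with $L(\mathcal{P})\le\alpha_0$ must be the even one. I would argue by computing (via the group LP, equivalently the small zero-sum game of Theorem~\ref{th:main}) the loss of each group as a function of how many $r_j$'s it contains and their total weight, and then doing a case split: balanced in cardinality but not in weight; unbalanced in cardinality; and unbalanced in both. The first two are straightforward — the weighted-budget term strictly exceeds $\alpha_0$ on the under-weighted side, and the full operations strictly exceed $\alpha_0$ on the over-populated side. The genuinely delicate case, and where I expect the construction to need the most care, is a partition that is unbalanced in \emph{both} cardinality and weight in a ``compensating'' direction (the group with fewer resources also carrying less weight): a naive single-budget gadget lets the heavily loaded group's own budget constraint throttle its loss back below $\alpha_0$, creating a false positive, as a direct LP computation shows. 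The gadget must therefore be engineered — e.g.\ via a carefully chosen budget capacity together with a mirrored/opposite budget resource, or via resource multiplicities — so that deviation from the even split on \emph{either} side strictly pushes the loss above $\alpha_0$ regardless of the cardinality pattern. Pinning down this monotone ``loss versus weight-deviation'' relation uniformly over all partition shapes, which ultimately reduces to evaluating the value of the per-group games of Theorem~\ref{th:main}, is the crux of the proof.
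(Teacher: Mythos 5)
There is a genuine gap: the reduction is never actually constructed. You correctly identify that the crux is soundness --- showing every 2-partition with loss at most $\alpha_0$ must induce an exact equal split --- but you then concede that your naive single-budget gadget admits false positives in the ``compensating'' case (fewer resources but less weight) and leave the fix as something that ``must be engineered.'' In an NP-hardness proof the gadget \emph{is} the proof, so this is not a completable sketch but an open design problem. There is also a structural issue you do not address: your auxiliary ``budget'' resources are themselves resources of the constructed instance, so the adversarial partition gets to place them --- both in one group, or split arbitrarily --- and your loss analysis would have to cover all of those placements as well, which a ``constant number of auxiliary resources'' gadget makes genuinely awkward. Finally, the $c_j = M + a_j$ shift that collapses the cardinality and sum conditions into one is unnecessary here, since the problem reduced from (Equal Cardinality Partition) already carries the cardinality constraint and, as it turns out, cardinality can be enforced for free by the gadget itself.

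For comparison, the paper avoids a global budget entirely by making the construction purely local: each element $s_\ell$ becomes a \emph{pair} of resources and a pair of operations supported only on that pair, with cost submatrix $\bigl(\begin{smallmatrix}1 & 1-\kappa_\ell\\ 1-\kappa_\ell & 1\end{smallmatrix}\bigr)$ where $\kappa_\ell = 2s_\ell\epsilon/(1+s_\ell\epsilon)$ and $\epsilon < 1/(2T)$. A group consisting of whole pairs then has loss exactly $\sum_\ell (1+s_\ell\epsilon)$ --- additive over its pairs --- so the single threshold $k+T\epsilon$ simultaneously forces each group to hold exactly $k$ unbroken pairs (any cardinality or pairing violation costs at least an extra $1 > T\epsilon$) and forces the weights to split evenly (the $\epsilon$-order term). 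Your instinct that one needs a ``monotone loss versus weight-deviation relation uniform over all partition shapes'' is exactly right; the paper obtains it by making the loss literally a sum of independent per-pair contributions $1+s_\ell\epsilon$, which is the idea missing from your construction. (Your NP-membership argument via per-group LP/game values is fine and is more than the paper spells out.)
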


This reduction shows that when the goal is to partition \(n\) resources into \(2\) subsets such that the worst-case loss (as measured by the aggregated single-dimensional gas measures on each subset) is minimized, the problem is NP-complete. (A similar argument extends to any fixed \(k \ge 2\).) 

\subsection{A Factorization Approach}

This approach seeks factor matrices $A, B$ such that $AB$ provides an upper bound on $W'$.

\begin{theorem}[k-Dim. Representation via Upper-Bounding Factorization]
\label{th:kdim_sufficient}
Let $W' \in \Rnonneg^{|I|\times n}$ be the normalized operation-resource matrix. Suppose there exist matrices $A \in \Rnonneg^{|I|\times k}$ and $B \in \Rnonneg^{k\times n}$ such that:
\begin{enumerate}
    \item $W'_{ij} \le (AB)_{ij}$ for all $i, j$. (Element-wise inequality)
    \item The L1 norm of columns of $B$ is bounded: $\norm{B_{\cdot, j}}_1 = \sum_{r=1}^k B_{rj} \le 1$ for all $j=1, \dots, n$.
\end{enumerate}
Then, the $k$-dimensional gas costs given by $A_{ir}$ (i.e., $g_i = A_{i,\cdot}$) represents $W'$.
\end{theorem}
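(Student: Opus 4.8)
The plan is to verify the definition of ``represents'' directly. Fix any $x \ge 0$ satisfying the $k$-dimensional constraints $\sum_i x_i A_{ir} \le 1$ for every $r \le k$, and show that each of the $n$ original constraints $\sum_i x_i w'_{ij} \le 1$ then follows. The entire argument is a single chain of inequalities obtained by substituting the upper bound $W' \le AB$, interchanging the (finite) order of summation, and applying the two hypotheses in turn.

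Concretely, first I would bound, for a fixed resource $j$,
\[
\sum_i x_i\, w'_{ij} \;\le\; \sum_i x_i\, (AB)_{ij},
\]
which is legitimate because $x_i \ge 0$ and hypothesis (1) gives $w'_{ij} \le (AB)_{ij}$ entrywise. Next I would expand $(AB)_{ij} = \sum_{r=1}^k A_{ir} B_{rj}$ and swap the two finite sums to obtain
\[
\sum_{r=1}^k B_{rj} \Bigl( \sum_i x_i\, A_{ir} \Bigr).
\]
Since $B_{rj} \ge 0$ and each inner sum is at most $1$ by the assumed $k$-dimensional feasibility of $x$, this quantity is at most $\sum_{r=1}^k B_{rj}$, which by hypothesis (2) is at most $1$. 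Chaining these steps gives $\sum_i x_i w'_{ij} \le 1$, and since $j$ was arbitrary, $x$ satisfies all the original resource constraints, which is exactly what ``$A$ represents $W'$'' requires.

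There is essentially no obstacle here: the only points needing (minor) care are that all quantities are nonnegative---so that the entrywise inequality $W' \le AB$ and the column-sum bound on $B$ can be multiplied through by the $x_i$'s without reversing any inequality---and that the sums are finite, so the reordering of summation is trivially valid. Since no tightness or optimality of the factorization is claimed, no matching construction or converse bound is required.
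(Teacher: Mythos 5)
Your proof is correct and follows essentially the same argument as the paper's: bound $\sum_i x_i w'_{ij}$ by $\sum_i x_i (AB)_{ij}$ using nonnegativity and the entrywise inequality, swap the finite sums to get $\sum_r B_{rj}\bigl(\sum_i x_i A_{ir}\bigr)$, and apply the feasibility bound and the column-sum condition in turn. No differences worth noting.
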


The challenge here lies in finding suitable matrices $A, B$ that satisfy $W' \le AB$ and $\norm{B_{\cdot, j}}_1 \le 1$: this is essentially Non-negative Matrix Factorization (NMF): see e.g.,  \cite{NIPS2000_f9d11525}. Trading off between $k$ (the number of dimensions) and the size of $B$ is a non-trivial problem. We leave this as an open problem for future work.

Nevertheless, given a factorization, we can use the associated matrix $A$ as a k-dimensional gas measure. It is then natural to ask the associated approximation factor.

\begin{theorem}\label{th:factorization_approx}
    Let $A$ be a $k$-dimensional gas measure for $W'$ calculated using the factorization approach of Theorem \ref{th:kdim_sufficient}. For each $\ell  \in [k]$, define the utility matrix $U^\ell$ as $U^\ell_{ij} = w'_{ij} / A_{i\ell}$. Then, the approximability is the reciprocal of the minimum game value among the games $\{\ell \in [k] \mid U^\ell \}$.
\end{theorem}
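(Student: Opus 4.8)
The plan is to decouple the $k$ coordinates of the measure $A$ and reduce to $k$ independent instances of the one‑dimensional problem already settled by Theorem~\ref{th:main}. First I would unwind the definition of approximability: since $A$ represents $W'$, saying that $A$ is an $\alpha$‑approximate representation means exactly that $\sum_i x_i w'_{ij}\le 1$ for all $j$ forces $\sum_i x_i A_{ir}\le\alpha$ for every coordinate $r\in[k]$. Hence the approximability of $A$ equals
\[
\alpha\;=\;\sup\Bigl\{\,\max_{r\in[k]}\sum_i x_i A_{ir}\;:\;x\ge 0,\ \sum_i x_i w'_{ij}\le 1\ \text{for all }j\,\Bigr\}.
\]
Because $\max_{r\in[k]}$ is a maximum over a finite set, it commutes with the supremum over feasible $x$ (both sides equal the supremum of $\sum_i x_i A_{ir}$ over all feasible $x$ and all $r$), so
\[
\alpha\;=\;\max_{\ell\in[k]}\alpha_\ell,\qquad\text{where}\qquad \alpha_\ell\;:=\;\sup\Bigl\{\,\sum_i x_i A_{i\ell}\;:\;x\ge 0,\ \sum_i x_i w'_{ij}\le 1\ \text{for all }j\,\Bigr\}.
\]

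Next I would recognize each $\alpha_\ell$ as precisely the quantity computed in the proof of Theorem~\ref{th:main}, with the single one‑dimensional gas vector taken to be the $\ell$‑th column $A_{\cdot,\ell}$ of $A$ in place of the minimal measure $g_i=\max_j w'_{ij}$. Operations $i$ with $A_{i\ell}=0$ contribute nothing to the objective of the program defining $\alpha_\ell$, and since resource budgets are only ever consumed, an optimal block loses nothing by setting their $x_i=0$; deleting them and substituting $x_i=z_i/A_{i\ell}$ on the rest turns the program into
\[
\alpha_\ell\;=\;\max\Bigl\{\,\sum_i z_i\;:\;z\ge 0,\ \sum_i z_i\,U^\ell_{ij}\le 1\ \text{for all }j\,\Bigr\},\qquad U^\ell_{ij}=\frac{w'_{ij}}{A_{i\ell}}.
\]
This is exactly the linear program whose optimum is the reciprocal of the value $v_\ell$ of the zero‑sum game with payoff matrix $U^\ell$ in which the row player (choosing $i$) is the minimizer --- the same LP/minimax identity underlying Theorem~\ref{th:main}, whose proof uses only that the packing program is bounded, not that the gas vector is minimal. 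Boundedness holds provided every primitive operation consumes a positive amount of at least one resource (an operation using none of any resource is ``free'' and may be assigned zero gas in every coordinate and dropped): then whenever $A_{i\ell}>0$ the feasibility constraint for a resource that operation $i$ actually uses caps $x_i$, so $\alpha_\ell=1/v_\ell<\infty$; the same assumption also ensures that any infinite entry $U^\ell_{ij}$ (which occurs only when $A_{i\ell}=0$ while $w'_{ij}>0$) merely makes the minimizer avoid row $i$, consistent with its deletion above.

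Assembling the pieces, $\alpha=\max_{\ell\in[k]}\alpha_\ell=\max_{\ell\in[k]}1/v_\ell=1/\min_{\ell\in[k]}v_\ell$, which is the asserted characterization; note that the argument uses only that $A$ represents $W'$, so it applies to any $k$‑dimensional representing measure, not just those produced by the factorization of Theorem~\ref{th:kdim_sufficient}. I expect the only genuinely delicate point to be the boundedness/non‑degeneracy bookkeeping in the middle step together with the benign handling of infinite entries of $U^\ell$; the two structural facts --- that a finite maximum pulls out of a supremum, and that the proof of Theorem~\ref{th:main} is insensitive to the choice of representing gas vector --- do the real work.
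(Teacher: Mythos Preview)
Your proposal is correct and follows essentially the same approach as the paper: decompose the $k$-dimensional approximability into $k$ independent one-dimensional problems and identify each with the reciprocal of a game value, then take the maximum over $\ell$. The only difference is organizational---you invoke the LP/minimax identity from Theorem~\ref{th:main} as a black box (and are more careful about the degenerate cases $A_{i\ell}=0$), whereas the paper simply reproves both directions of Theorem~\ref{th:main} inline with $A_{i\ell}$ substituted for $g_i$.
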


How then does this approach compare to the partitional approach we outlined previously? We show that the factorization approach is at least as good as the partitional approach: 
\begin{corollary}\label{cor:factorization_approx}
    For any number of dimesnions $k$, there exists a $k$-dimensional gas measure $A$ via the factorization approach of Theorem \ref{th:kdim_sufficient} that is at least as good as the optimal $k$-dimensional gas measure via the partion approach of Theorem \ref{thm:optimal-partition}.
\end{corollary}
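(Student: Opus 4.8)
The plan is to show that the partition approach is literally a restricted special case of the factorization approach, so the latter can only do (weakly) better. Let $\mathcal{P}^\star = \{P_1,\dots,P_k\}$ be a partition of $[n]$ minimizing $L(\mathcal{P})$, with optimal value $\alpha^\star = L(\mathcal{P}^\star)$; I will assume each $P_r$ is nonempty (if the best partition uses fewer parts, pad trivially or apply the argument with the smaller $k$). First I would exhibit an explicit factorization encoding this partition: set $A_{ir} = g_i^{(r)} = \max_{j\in P_r} w'_{ij}$, and let $B$ be the $0/1$ membership matrix $B_{rj} = \mathbbm{1}[\,j\in P_r\,]$. Since the $P_r$ are disjoint and cover $[n]$, every column of $B$ has a single nonzero entry equal to $1$, so $\norm{B_{\cdot,j}}_1 = 1 \le 1$ for all $j$; and writing $r(j)$ for the part containing $j$, $(AB)_{ij} = A_{i,r(j)} = \max_{j'\in P_{r(j)}} w'_{ij'} \ge w'_{ij}$. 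Thus $(A,B)$ satisfies both hypotheses of Theorem~\ref{th:kdim_sufficient}, so this $A$ is a legitimate $k$-dimensional gas measure obtained by the factorization approach.

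Second, I would compare the two approximation factors through their game-value characterizations. By Theorem~\ref{th:factorization_approx}, the approximability of $A$ is $1/\min_{\ell\in[k]}\mathrm{val}(U^\ell)$ where $U^\ell_{ij} = w'_{ij}/A_{i\ell} = w'_{ij}/g_i^{(\ell)}$ ranges over all $j\in[n]$. On the other hand, applying Theorem~\ref{th:main} to the sub-instance consisting only of the resources in $P_\ell$ (whose optimal single-dimensional gas measure is exactly $g_i^{(\ell)} = \max_{j\in P_\ell} w'_{ij}$), the partition loss $L(P_\ell)$ equals $1/\mathrm{val}(\widetilde U^\ell)$, where $\widetilde U^\ell$ is precisely the restriction of $U^\ell$ to the columns $j\in P_\ell$. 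Since $\widetilde U^\ell$ is obtained from $U^\ell$ by deleting columns and the column player is the maximizer, enlarging the column strategy set cannot decrease the value: $\mathrm{val}(U^\ell) \ge \mathrm{val}(\widetilde U^\ell) = 1/L(P_\ell)$. Taking the minimum over $\ell$, $\min_\ell \mathrm{val}(U^\ell) \ge 1/\max_\ell L(P_\ell) = 1/L(\mathcal{P}^\star) = 1/\alpha^\star$, hence the approximability of $A$ is at most $\alpha^\star$, as claimed.

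I do not expect a real obstacle here: the only nonroutine ingredient is the standard monotonicity fact that adding pure strategies for the maximizer weakly increases the value of a zero-sum game, which is immediate. The main things to be careful about are bookkeeping — verifying that $A_{i\ell} = g_i^{(\ell)}$ makes $U^\ell$ and the partition-subgame matrix literally agree on their shared columns — and the degenerate case $g_i^{(\ell)} = 0$, i.e. an operation using no resource in $P_\ell$; as elsewhere in the paper one may assume non-degeneracy, or simply drop such rows since a cost-free operation imposes no constraint. The conceptual takeaway of the corollary is exactly that the partition approach amounts to constraining $B$ in Theorem~\ref{th:kdim_sufficient} to be a $0/1$ membership matrix, a strict specialization of the feasible $(A,B)$ pairs allowed there.
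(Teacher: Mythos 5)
Your proof is correct and follows exactly the paper's route: the paper's own (very terse) proof also encodes the optimal partition as the factorization $A_{i\ell}=\max_{j\in P_\ell}w'_{ij}$, $B_{\ell j}=\mathbbm{1}[j\in P_\ell]$ and declares the conclusion ``clear.'' You additionally verify the two hypotheses of Theorem~\ref{th:kdim_sufficient} and supply the game-value monotonicity argument comparing $U^\ell$ to its restriction to the columns of $P_\ell$, which the paper leaves implicit; this is the same idea carried out more carefully, not a different approach.
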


Finding good algorithms for the factorization approach is an open problem, as we described above. However, in practice, there are only a small, finite number of resources, and so the factorization approach may nevertheless be useful via  brute-force. The exact gains to be had from using this approach is an interesting empirical problem that depends on the details of the blockchain and the architecture of its clients, and we leave this as future work.

\section{Practical Bounds}

The analysis above is a theoretical worst-case one.  Using it 
directly on the set of possible operations of a system
will often give you bad estimates. For example, when considering a single-dimensional gas measure. There will always be
{\em some} operations that use only a single resource, 
which takes us back to the worst case over all systems for
which the bound is as high as the number of different resources.
In reality one may get more ``mileage'' ---
i.e., better bounds on the maximal possible loss ---
by taking into account some further knowledge about the actual system.  We mention two significant such considerations.  Finally we conclude with a discussion of the modeling assumption that underlie our paper. 

\subsection{Using Distributional Information}\label{sec:practical}

Our analysis above was worst-case in two respects: not only does the gas measure $g$ always
guarantee  that we are within the constraints of the system, but the approximation factor
calculated was for a worst-case mixture of operations.  
Practically, we may have a reasonable estimate, based on historical data, of the actual mix of operations.  
A reasonable practical goal would be to choose a gas measure that is safe in the worst-case --- i.e., still take our 
minimal $g$ --- but evaluate its quality based on the historical mix of operations.   This is quite easy to do using the game specified, by using the historical 
distribution on operations as the row player's strategy $x^{hist}$  (after taking into account the implied scaling) and then looking at the best reply of the column (maximizing) player to it. Since the row (minimizing) player 
is not playing his optimal strategy anymore, the column player will be able to achieve a higher outcome 
$\nu^{hist} \ge \nu$ than the game's value, and the reciprocal of that $\alpha^{hist} = 1/\nu^{hist}$ will give a lower, better, approximation ratio.  As an extreme example of this, consider the operation-matrix in Table \ref{tab:hist} with two resources and three operations.
Another example that spells out how the operation 
frequencies are scaled to get the distribution $x^{hist}$ 
appears in Appendix \ref{app-dist}.

More generally, historical analysis may give a range of possible distributions. By defining the set of strategies of the row player to be this range, and computing the minimax, we have the worst case performance of the minimal measure $g$ over historical data. 
\begin{table}
\centering

\centering
\begin{tabular}{|c||c|c||c|} 
\hline 
{\bf Operation} & {\bf $w_{i1}$} & {\bf $w_{i2}$} & $g_i$\\ 
\hline \hline
{\bf Op1}  & 0 & 1 & 1\\ 
\hline 
{\bf Op2}  & 1 & 1 & 1\\ 
\hline 
{\bf Op3}  & 1 & 0 & 1\\ 
\hline \hline
{\bf $B_j$}    & 1 & 1 &\\ 
\hline
\end{tabular}
\caption{\small An extreme example with three operations and two resources that has an approximability ratio of 2 with
the row player's equilibrium strategy being $(1/2,0,1/2)$.  However if
the actual historical 
distribution is, say,
$(5\%,80\%,15\%)$, then the maximizing player
will play the first resource giving him expected utility of
$0.85$ implying that
the actual loss by using $g$ will only be $\alpha^{hist}=1/0.85 = 20/17$.}
\label{tab:hist}
\end{table}

An important observation is that
this approximation guarantee holds not just over a single block 
with the given distribution of operations
but even over a sequence of blocks with the given {\em average} distribution, and this is so since the
column player can consistently play his best-reply to the average distribution.

\subsection{Non-congesting Resources}

The analysis above need only be applied to resources that actually constrain the system 
at the block level.   Blockchains systems often have other resources that do not constrain the system at the block level and these can be taken care of in a way that is completely independent of our analysis above, and in particular do not need to be folded into the 
single gas measure.  This is significant since these resources are often rather ``orthogonal'' to other congesting resources so
folding them into the same single gas measure would incur a much higher loss.

Take for example a typical ``Layer 2'' blockchain.   The amount of storage cells modified by a transactions
carries a cost to the system (that needs to pay a data availability provider to store it),
but is not by itself a binding constraint on the system.  In such cases, the use of such a 
``non-congesting'' resource needs to be tracked and paid for but since these resource do not limit inclusion in the block they need not
be considered in the operation-resource matrix or folded into our single gas measure.

\ignore{
Another example are resources that do not constrain a single block but rather become
constraints only over long time scales.  An example of such a resource is the total amount
of new storage that is created and needs to be maintained ``forever''.  While this may
indeed become a very significant constraint in the long term, 
it has has no ``bite''
at the single block level.  A naive strategy for
handling such a long-term constraint is 
to translate it into a block-by-block constraint by dividing the long-term desired quota over the number of blocks in the desired 
long-term period of time.  
This not only is inefficient as it introduces significant artificial constraints\footnote{
This is, e.g.,  
    the current mechanism in Ethereum that simply assigns a high gas cost to
    the creation of a new storage cell and has the 
    drawback that storage prices artificially 
    fluctuate with the short-term load on Ethereum.  An extreme
    demonstration of the inefficiency that this causes was observed before 
    EIP-3529 \cite{3529} came into effect
    when there was a significant rebate (75\%) 
    for erasing a storage cell, and some users would ``buy'' 
    storage cells when gas prices
    were low and then ``sell'' them when prices were high.  
    This necessitated EIP-3529 \cite{3529} which significantly reduced
    the rebate amount (to 20\%), with the unfortunate side effect of 
    reduction in the incentives
    for practicing ``good hygiene'' of erasing storage cells that are 
    no longer needed.
}
but also adds a dimension - which increases the loss when folded into
a single gas measure.  An alternative approach would be to set a price
for the resource that will result in the long-term required bound on usage.
While in principle, this could be done by hand using trial and error, it is also possible to use a slow-moving EIP-1559-like mechanism to automatically - but slowly - change the price so that in the short-term
it will behave like a fixed cost that need not be folded into the single gas measure as above and, in the long-term provide the appropriate constraint.
We demonstrate this using an example in appendix \ref{app-state}.
} 

\subsection{Modeling Assumptions}
There are two main modeling assumptions that we make in this paper:
\begin{enumerate}
    \item The resources used by a given operation are \emph{fixed} and known in advance.
    \item The resources used by a collection of  transactions are the sum of the individual resource usages.
\end{enumerate}

Both of these are sensible to a first approximation. For example, the amount of storage used by a collection of transactions is indeed the sum of the individual storage usages. 

That said, in modern high-performance blockchains, both assumptions may be less appropriate. For example, many blockchains use parallel processing to speed up the execution of transactions. In this case, the total computational usage of a transaction depends on the other transactions in the block (due to write conflicts etc), since this determines how parallelizable the block is. Similarly, the cost of reading from cold storage may depend on whether a recent past transaction has already read the same piece of data (in which case it is already in memory). In such cases, our resource matrix may be interpreted as a ``maximum'' usage. The actual usage may be lower, and using distributional assumptions as discussed in Section \ref{sec:practical} may give a better approximation.
\newpage
\bibliographystyle{alpha}
\bibliography{bib}

\appendix

\section{Appendix}

\subsection{Difficulties with Multidimensional Resources} \label{app-comp}

While it is certainly obvious that handling more complex scenarios 
such as multi-dimensional resources is more difficult than handling the
simpler case of a single-dimensional gas, it is still worthwhile some
of the specific difficulties that this entails.

We start with the algorithmic problem.  
In the multidimensional case, assembling the block becomes a multi-dimensional knapsack problem (see, 
e.g., survey \cite{F04}).  Not only is it NP-complete but even solving
the {\em linear programming relaxation} is no longer an easy 
greedy algorithm, 
and need not be as good a heuristic for the original program as it is in the single dimensional case.  Furthermore, looking at the assembly of the sequence of blocks as an {\em online problem}, 
the multidimensional case can 
provably not
be handled as well as the single dimensional case \cite{BN25}.  
Finally, transactions often lack {\em precise estimates}
of the quantities of resources
that they use.  In Ethereum, for example, only upper bounds are given, but the actual payment is 
according to the actual amount of resources used.  Luckily the basic
greedy algorithm for composing a block handles this discrepancy well 
since it just
sorts the transaction by bid-per-unit and so over-estimating your resource
use hardly changes the outcome.\footnote{Except in the rare case that your 
transaction just
fits into the remaining block space but the over-estimate makes it 
look as though it wouldn't.}  It is not clear that the same robustness to resource use declaration will hold in the multi-dimensional case.

We continue with the strategic difficulties. How would one even bid in the multi-dimensional case? The natural way would be to suggest a single bid for the whole transaction (which can specify the value of the transaction or a global tip amount, if using EIP-1559-like fixed prices) together with quantities for each resource.  This solution would rule out the natural semantics of bidding per-unit of resource (since it combines multiple resources into one bid) requiring precise estimates ----rather than just upper bounds---of the resources used by each transaction. The alternative is to provide a maximum price for each resource together with the quantity of use of this resource.  This option is not clear in its semantics and, in particular, destroys any``incentive compatibility'' as it is unclear how a user should split their value for a transaction into bids for the different resources. Ethereum currently uses a hybrid method that seems susceptible to both types of difficulties where a maximum price is specified  for each resource (gas and blobs) but a ``tip'' (per-unit) is specified only for the gas resource.

\subsection{Proofs} \label{app-proof}

\begin{proof}[Proof of Observation \ref{obs:measure}]
Look at some fixed operation $i^*$ and resource $j^*$ 
and consider $x_{i^*}=1$ and $x_{i}=0$ for all $i\ne i^*$. 
Clearly, in order for $\sum_i x_i g_i  = g_{i^*} \le 1$ to imply 
$\sum_i x_i w_{ij^*} = w_{i^*j^*} \le B_{j^*}$ we must have $g_{i^*} \ge w_{i^*j^*}/B_{j^*}$.
Taking the maximum over all possible resources, we must have 
$g_{i^*} \ge \max_j w_{i^*j}/B_{j}$.  To see that this gas measure suffices, 
assume that  $\sum_i x_i \max_j w_{ij}/B_{j} \le 1$ we thus have that for every $j$:
$\sum_i x_i \cdot w_{ij}/B_{j} \le 1$ and thus $\sum_i x_i \cdot w_{ij} \le B_{j}$, as needed.  
\end{proof}

\bigskip

\begin{proof}[Proof of Theorem \ref{th:main}]
    Let $\nu$ be the value of this game and denote $\alpha = 1/\nu$. Let $(x^*_i)$ be the equilibrium strategy of the row (operation) minimizing player.

    The first thing that we have to show is that for any non-negative vector $x=(x_i)$ 
    we have that 
    \[\forall j:\;  \sum_i x_i w_{ij} \le B_j \; \implies \; \sum_i x_i g_i \le \alpha.\]  
    Assume by way of contradiction that $\sum_i x_i g_i = \alpha' > \alpha$. Define $x'_i = x_i g_i /\alpha'$. So we have that $\sum_i x'_i = 1$ so $(x'_i)$ is a valid strategy for the row player. Therefore, we must have that
    \[\max_j \sum_i x'_i \cdot u_{ij} \ge \nu,\] 
    i.e., for some $j$ we have that,
    \[\sum_i (x_i\cdot \frac{g_i}{\alpha'}) \cdot \frac{w_{ij}}{B_j \cdot g_i} \ge \nu.\] 
    Since $\alpha' \cdot \nu >1$ we get $\sum_i x_i  \cdot w_{ij} > B_j$, a contradiction, as needed.

    The second thing that we have to show is that $\alpha$ is optimal in the sense that  there exists some non-negative vector $x=(x_i)$ with $\sum_i x_i w_{ij} \le B_j$ such that for some $j$ we have $\sum_i x_i g_i \ge \alpha$.  Take $x_i = \alpha \cdot x^*_i / g_i$ where $x^*_i$ is the minimizing strategy of the row player. Note that for this choice of $x_i$ we have that, $\sum_i x_i g_i = \alpha \cdot \sum_i x^*_i = \alpha$. Further, note that:
     \[\sum_i x_i w_{ij} = \alpha B_j \cdot \sum_i x^*_i w_{ij} / (B_j \cdot g_i) =
    \alpha B_j \cdot \sum_i x^*_i u_{ij}.\]  Since $x^*$ is the minimizing strategy
    we have that for every $j$, $\sum_i x^*_i u_{ij} \le \nu$
    and thus  $\sum_i x_i w_{ij} \le \alpha B_j \nu = B_j$ as needed. 
\end{proof}

\bigskip

\begin{proof}[Proof of Theorem \ref{thm:optimal-partition}]
We prove the theorem by reduction from Equal Cardinality Partition (ECP), which is a well-known NP-complete problem.

\begin{definition}[Equal Cardinality Partition (ECP)]
Given an integer $k$ and a multiset $S$ of $2k$ positive integers that sum to $2T$, the ECP problem is to partition $S$ into $2$ subsets of cardinality $k$ that each sum to $T$.
\end{definition}

We will show that the 2-gas-partition problem is NP-complete by reduction from ECP.  Fix an instance of ECP where $S = \{s_1, \dots, s_{2k}\}$.  We construct an instance of the 2-gas-partition problem defined as:
\begin{itemize}
    \item \textbf{Operations}: There are two operations for each element $s_i \in S$, i.e. $|I| = 4k$: for each $\ell \in [2k]$, we have two operations $i_\ell^1, i_\ell^2$.
    \item \textbf{Resources}: There are two resourses for each element $s_i \in S$, i.e. $|J| = 4k$: for each $\ell \in [2k]$, we have two resources $j_\ell^1, j_\ell^2$.
    \item \textbf{Costs}: For each $\ell \in [2k]$, operations $i_\ell^1, i_\ell^2$ use $0$ of all resources except $j_\ell^1, j_\ell^2$. The 2-by-2 submatrix of these costs is given by:
    \[
        \begin{pmatrix}
            1 & 1- \kappa_\ell  \\
            1-\kappa_\ell & 1 
        \end{pmatrix}
    \]
    where $\kappa_\ell = 2 s_\ell \epsilon / (1+ s_\ell \epsilon)$ and $\epsilon$ is any constant such that $0 < \epsilon < \frac{1}{2T}$.
\end{itemize}

\begin{claim}
    The instance of ECP has a solution if and only if the instance of the 2-gas-partition problem has a solution of approximability $k +  T\epsilon$.
\end{claim}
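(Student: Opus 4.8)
The plan is to use Theorem~\ref{th:main} to turn each side of the partition into a closed‑form loss, exploiting the block structure of the construction. Since the operations $i_\ell^1,i_\ell^2$ touch only the resources $j_\ell^1,j_\ell^2$, for any resource set $S$ the zero‑sum game that Theorem~\ref{th:main} attaches to the sub‑instance $(W|_S, B|_S)$ decomposes into independent blocks, one per pair $\ell$ whose resources meet $S$ — operations touching no resource in $S$ have $g_i=0$, so they neither constrain the sub‑instance nor affect its game and may be dropped. First I would evaluate each block. If both $j_\ell^1,j_\ell^2\in S$, the block is the $2\times2$ game $\left(\begin{smallmatrix}1 & 1-\kappa_\ell\\ 1-\kappa_\ell & 1\end{smallmatrix}\right)$ — and its rows are already normalized, since $g_{i_\ell^1}=g_{i_\ell^2}=1$ — whose value is $1-\kappa_\ell/2$ with both players uniform. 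If exactly one resource of the pair lies in $S$, then after normalization both of the pair's operations give utility $1$ to that single resource (because $g_{i_\ell^2}=1-\kappa_\ell=w_{i_\ell^2,j_\ell^1}$), so the block has value $1$.

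Next I would assemble the blocks. The row player splits a total mass of $1$ among the active pairs, the column player picks the single best resource, and a water‑filling argument gives that the game value of the sub‑instance on $S$ is $\bigl(\sum_\ell 1/c_\ell\bigr)^{-1}$, hence $L(S)=\sum_\ell 1/c_\ell$, where $c_\ell=1-\kappa_\ell/2$ when pair $\ell$ lies entirely in $S$ and $c_\ell=1$ when $\ell$ is split across the partition. The definition of $\kappa_\ell$ is calibrated precisely so that $1/(1-\kappa_\ell/2)=1+s_\ell\epsilon$. Writing $F_1,F_2$ for the pairs fully inside $P_1$ resp.\ $P_2$ and $Sp$ for the split pairs (so $|F_1|+|F_2|+|Sp|=2k$), this yields $L(P_1)=|F_1|+\epsilon\sum_{\ell\in F_1}s_\ell+|Sp|$ and symmetrically for $P_2$.

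Finally I would read off the equivalence. Summing, $L(P_1)+L(P_2)=2k+2T\epsilon+\bigl(|Sp|-\epsilon\sum_{\ell\in Sp}s_\ell\bigr)$. If there is at least one split pair, then $\epsilon\sum_{\ell\in Sp}s_\ell\le 2T\epsilon<1\le|Sp|$ by the choice $\epsilon<1/(2T)$, so $L(P_1)+L(P_2)>2k+2T\epsilon$ and $\max(L(P_1),L(P_2))>k+T\epsilon$; hence no partition meeting the threshold $k+T\epsilon$ splits a pair. For a partition with no split pair, $F_1\sqcup F_2=[2k]$ and $L(P_1)+L(P_2)=2k+2T\epsilon$, so $\max(L(P_1),L(P_2))\le k+T\epsilon$ forces $L(P_1)=L(P_2)=k+T\epsilon$; since $\bigl||F_1|-k\bigr|=\epsilon\,\bigl|T-\sum_{\ell\in F_1}s_\ell\bigr|\le T\epsilon<1$ is an integer, it must be $0$, so $|F_1|=|F_2|=k$ and $\sum_{\ell\in F_1}s_\ell=\sum_{\ell\in F_2}s_\ell=T$, an ECP solution. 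Conversely, an ECP solution $S=S_1\sqcup S_2$ induces the partition with $P_r$ collecting the resource pairs of $S_r$, which has no split pairs and achieves $L(P_1)=L(P_2)=k+T\epsilon$. The main obstacle is the first two steps: pinning down the exact per‑block values and, crucially, the monotonicity that a split pair costs $2$ in total while a full pair costs only $1+s_\ell\epsilon<2$, so the optimal partition never splits a pair; once this additive formula and the precisely tuned bound $\epsilon<1/(2T)$ are in hand, the rest is bookkeeping, and NP‑membership is routine since guessing a partition and evaluating $L$ by solving the two associated matrix games is polynomial.
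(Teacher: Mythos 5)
Your proof is correct and follows the same underlying strategy as the paper's: exploit the block structure of the construction, compute the value of each $2\times 2$ block game via Theorem~\ref{th:main}, and use the calibration $1/(1-\kappa_\ell/2)=1+s_\ell\epsilon$ together with $\epsilon<1/(2T)$ to force the partition to respect pairs and encode the subset sums. The difference is in execution of the reverse direction. The paper eliminates bad partitions by three separate structural arguments (unequal cardinality gives loss at least $k+1$; $m$ unpaired resources give loss at least $k+m$ --- a bound that is actually loose/misstated, the correct count being $k+m/2$, though the conclusion survives since $T\epsilon<1/2$). You instead derive a single closed-form additive identity $L(S)=\sum_\ell 1/c_\ell$ for the loss of any resource subset, sum it over both sides of the partition to get $L(P_1)+L(P_2)=2k+2T\epsilon+\bigl(|Sp|-\epsilon\sum_{\ell\in Sp}s_\ell\bigr)$, and read off all three eliminations from one inequality plus an integrality argument. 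This is tighter and arguably repairs the paper's loose step; it also makes explicit the water-filling fact (value of a block-diagonal game is the harmonic combination of block values) that the paper uses implicitly. One small point worth keeping as you wrote it: operations whose pair is entirely outside $P_r$ get $g_i^{(r)}=0$ and must be explicitly discarded from the sub-game, since $u_{ij}=w_{ij}/(B_jg_i)$ is otherwise undefined for them; the paper glosses over this as well.
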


\noindent
Consider the following proof of the claim for each direction:

\bigskip
\noindent
    ($\implies$) Suppose that $S$ can be partitioned into two subsets $S_1, S_2$ of cardinality $k$ such that $\sum_{s \in S_1} s = \sum_{s \in S_2} s = T$. Consider the corresponding parition of the resources into the two subsets $J_1, J_2$ of cardinality $2k$ each, where $s_\ell \in S_i \implies j_\ell^1, j_\ell^2 \in J_i$ for $i \in \{1,2\}$. 

    The minimax strategy for the row player in game $i$ is to play the operations $i_\ell^1, i_\ell^2$ each with probability:
    \begin{align*}
       & \frac12 \frac{(1+ s_\ell \epsilon)}{\sum_{s \in S_i} (1+ s \epsilon)} \\
       =& \frac12 \frac{(1+ s_\ell \epsilon)}{k + T\epsilon}.
    \end{align*}
    and the minimax strategy for the column player is to randomize over all resources in $J_i$ with equal probability.

    The value of the game is the $1/ (k + T\epsilon)$ and the approximability of the 2-gas-partition problem is thus $k +  T\epsilon$ as needed.

    \bigskip
    \noindent
    ($\impliedby$) Suppose that there exists a partition of the resources into two subsets $J_1, J_2$, such that the 2-gas-partition problem has a solution of approximability $k +  T\epsilon$. This implies that each game $i$ has a solution of approximability $(k +  T\epsilon)$. We need to show that this implies that $S$ can be partitioned into two subsets $S_1, S_2$ of cardinality $k$ such that $\sum_{s \in S_1} s = \sum_{s \in S_2} s = T$.

    First note that each of $J_1$ and $J_2$ must have cardinality exactly $2k$. To see this, note that otherwise the larger set would necessarily have an approximability of at least $k + 1$ which is larger than $k +  T\epsilon$.

    Next, note that the resources in $J_1$ must be paired (and therefore, also, the resources in $J_2$). To see this, suppose not, suppose that $J_1$ contains $m>0$ unpaired resources and $k-\frac{m}{2}$ pairs of paired resources (our previous claim implies that $m$ must be even). Then, the approximability of each game is at least $k + m$ which is greater than $k +  T\epsilon$ by our assumption.

    Finally, suppose that each of $J_1$ and $J_2$ contain exactly $k$ pairs of paired resources. Denote the subset of $S$ that corresponds to the resources in $J_1$ as $S_1$, and the subset of $S$ that corresponds to the resources in $J_2$ as $S_2$. Observe that the for the game corresponding to $J_1$, the corresponding minimax strategy if for the row player to play the operations $i_\ell^1, i_\ell^2$ each with probability:
    \begin{align*}
        & \frac12 \frac{(1+ s_\ell \epsilon)}{\sum_{s \in S_1} (1+ s \epsilon)} \\
        =& \frac12 \frac{(1+ s_\ell \epsilon)}{k + \epsilon \sum_{s \in S_1} s }.
    \end{align*}
    The approximability of this game is thus $k +  \epsilon \sum_{s \in S_1} s $, and similarly for $S_2$.  Therefore if the approximability of the 2-gas-partition problem is $k +  T\epsilon$, we must have that $\sum_{s \in S_1} s = \sum_{s \in S_2} s = T$.
\end{proof}
\bigskip

\begin{proof}[Proof of Theorem \ref{th:kdim_sufficient}]
Assume $x \ge 0$ satisfies $\sum_i x_i A_{ir} \le 1$ for all $r$. Let $y_r = \sum_i x_i A_{ir}$, so $0 \le y_r \le 1$.
We want to show $\sum_i x_i w'_{ij} \le 1$ for all $j$.
Using condition (1) ($W' \le AB$):
\[ \sum_i x_i w'_{ij} \le \sum_i x_i (AB)_{ij} \]
Expanding the matrix product:
\[ \sum_i x_i (AB)_{ij} = \sum_i x_i \sum_r A_{ir} B_{rj} = \sum_r \left( \sum_i x_i A_{ir} \right) B_{rj} = \sum_r y_r B_{rj} \]
Since $y_r \le 1$ and we assumed $B_{rj} \ge 0$:
\[ \sum_r y_r B_{rj} \le \sum_r (1) B_{rj} = \sum_r B_{rj} = \norm{B_{\cdot, j}}_1 \]
Using condition (2) ($\norm{B_{\cdot, j}}_1 \le 1$):
\[ \sum_r y_r B_{rj} \le 1 \]
Combining the inequalities, we have $\sum_i x_i w'_{ij} \le 1$ for all $j$.
\end{proof}

\bigskip

\begin{proof}[Proof of Theorem \ref{th:factorization_approx}]
The proof is similar to the proof of Theorem \ref{th:main}.
Fix any $\ell \in [k]$ and let $U^\ell$ be the utility matrix for the game corresponding to the $\ell$-th dimension. Let $\nu^\ell$ be the value of this game and let $\alpha^\ell = 1/\nu^\ell$.

The first thing that we have to show is that for any non-negative vector $x=(x_i)$ 
    we have that 
    \[\forall j:\;  \sum_i x_i w'_{ij} \le 1 \; \implies \; \sum_i x_i A_{i \ell} \le \alpha^\ell.\]  
    Assume by way of contradiction that $\sum_i x_i A_{i \ell} = \alpha' > \alpha^\ell$. Define $x'_i = x_i A_{i \ell} /\alpha'$. So we have that $\sum_i x'_i = 1$ so $(x'_i)$ is a valid strategy for the row player. Therefore, we must have that
    \[\max_j \sum_i x'_i \cdot u_{ij}^\ell \ge \nu^\ell,\] 
    i.e., for some $j$ we have that,
    \[\sum_i (x_i\cdot \frac{A_{i \ell}}{\alpha'}) \cdot \frac{w'_{ij}}{1 \cdot A_{i \ell}} \ge \nu^\ell.\] 
    Since $\alpha' \cdot \nu^\ell >1$ we get $\sum_i x_i  \cdot w'_{ij} > 1$, a contradiction, as needed.

    The second thing that we have to show is that $\alpha^\ell$ is optimal in the sense that  there exists some non-negative vector $x=(x_i)$ with $\sum_i x_i w'_{ij} \le 1$ such that for some $j$ we have $\sum_i x_i A_{i \ell} \ge \alpha^\ell$.  Take $x_i = \alpha^\ell \cdot x^*_i / A_{i \ell}$ where $x^*_i$ is the minimizing strategy of the row player. Note that for this choice of $x_i$ we have that, $\sum_i x_i A_{i\ell} = \alpha^\ell \cdot \sum_i x^*_i = \alpha^\ell$. Further, note that:
     \[\sum_i x_i w'_{ij} = \alpha^\ell \cdot \sum_i x^*_i w'_{ij} / (A_{i\ell}) =
    \alpha^\ell \cdot \sum_i x^*_i u_{ij}^\ell.\]  Since $x^*$ is the minimizing strategy
    we have that for every $j$, $\sum_i x^*_i u_{ij}^\ell \le \nu^\ell$
    and thus  $\sum_i x_i w'_{ij} \le \alpha^\ell \nu^\ell = 1$ as needed. 

    Having shown the approximation factor for any single dimension, the approximation factor for the $k$-dimensional gas measure is clearly the maxmium, i.e., the reciprocal of the minimum game value almong the games $\{\ell \in [k] \mid U^\ell \}$.
\end{proof}

\bigskip

\begin{proof}[Proof of Corollary \ref{cor:factorization_approx}]
Consider the partitional approach and fix any $k$-element partition of the resources into $k$ subsets $J_1, \dots, J_k$. Note that we could have also written the associated gas measure as one potential $k$-dimensional factoring where: 
\begin{align*}
    A_{i \ell} &= \max_{j \in J_\ell} w'_{ij}, \\
    B_{\ell j} &= \mathbbm{1}_{j \in J_\ell}.
\end{align*}
Therefore clearly the factorization approach is at least as good as the partitional approach.
\end{proof}

\bigskip

\subsection{An example for using distributional information of operations}\label{app-dist}

Consider the system described in Table 1, with its minimal
gas measure $g$ denoted in the table too.   Assume that we know that all four operations 
have equal frequency $f_1=f_2=f_3=f_4=1/4$.  
To translate these frequencies into the probabilities played by
the row player in the game depicted in Table 2, we need to follow the two implied 
scalings that are depicted in tables \ref{tab:4} and \ref{tab:5}.  The first
scaling normalizes all columns by dividing each column by $B_j$, obtaining table \ref{tab:4} and 
the second scaling normalizes all rows by dividing each row by $g_i$ obtaining table
\ref{tab:5} which gives exactly the game shown in table \ref{tab:my_label2}.  
The row player's probability for each of the normalized operations is normalized
by multiplying $f_i$ by $g_i$ and then normalizing the sum of the normalized frequencies to 1.
Thus for example the translated frequency of the normalized first operation will be calculated as 
$x^{hist}_1 = f_1 \cdot g_1 /(\sum f_i g_i) = (1/4)\cdot(1/3)/((1/3+2/3+3/5+2/3)/4) = 5/34$ 
and similarly, $x^{hist}_2 = 10/34$, 
$x^{hist}_3 = 9/34$ and $x^{hist}_4 = 10/34$. Consequently, playing column 1 gives $(5/34)\cdot(2/5)+(10/34)\cdot(3/5)+(9/34)\cdot 1+(10/34)\cdot 1 =27/34$ and playing column 2 gives $25/34$. So the loss in this given operation distribution is $34/27$ (reciprocal of the obtained utility), 
which is lower than the loss in the completely worst-case bound of 11/8.  

\begin{table}
\centering
\begin{minipage}{0.45\textwidth}
\centering
\begin{tabular}{|c|c||c|c||c|} 
\hline 
{\bf Freq.} & {\bf OP} & {\bf $w_{i1}$} & {\bf $w_{i2}$} & $g_i$\\ 
\hline \hline
1/4 & {\bf Op1}  & 2/15 & 1/3 & 1/3\\ 
\hline 
1/4 & {\bf Op2}  & 2/5 & 2/3 & 2/3\\ 
\hline 
1/4 & {\bf Op3}  & 3/5 & 1/3 & 3/5\\ 
\hline 
1/4 & {\bf Op4}  & 2/3 & 1/3 & 2/3\\ 
\hline \hline
1 & {\bf $B_j$}    & 1 & 1 &\\ 
\hline
\end{tabular}
\caption{\small The same example as in table \ref{tab:my_label1}, after normalizing $B_j$'s to 1, with the
raw operation frequencies.}
\label{tab:4}
\end{minipage}\hfill
\begin{minipage}{0.45\textwidth}
\centering
\begin{tabular}{|c|c||c|c||c|} 
\hline 
{\bf Freq.} & {\bf OP} & {\bf $w_{i1}$} & {\bf $w_{i2}$} & $g_i$\\ 
\hline \hline
1/12 & {\bf Op1}  & 2/5 & 1 & 1\\ 
\hline 
1/6 & {\bf Op2}  & 3/5 & 1 & 1\\ 
\hline 
3/20 & {\bf Op3}  & 1 & 5/9 & 1\\ 
\hline 
1/6 & {\bf Op4}  & 1 & 1/2 & 1\\ 
\hline \hline
17/30 & {\bf $B_j$}    & 1 & 1 &\\ 
\hline
\end{tabular}
\caption{\small The same example as in table \ref{tab:4}, after normalizing $g_i$'s to 1.  The frequences are also adjusted.  E.g. each normalized-OP1 is equivalent to 3 original-OP1's.}
\label{tab:5}
\end{minipage}
\end{table}

\end{document}